\documentclass{article} 
\usepackage{graphicx}           
\usepackage{amssymb}
\usepackage{amsfonts}
\usepackage{amsmath}
\usepackage{amsthm}
\usepackage{makeidx}
\usepackage{mathtools}

\usepackage{enumerate}

\newtheorem{theorem}{Theorem}
\newtheorem{proposition}{Proposition}
\newtheorem{corollary}{Corollary}
\theoremstyle{definition}
\newtheorem{example}{Example}

\newcommand{\scong}{\cong_s}
\newcommand{\id}{\text{id}}

\begin{document}
   \begin{center}
      \Large\textbf{Conjugacy of one-dimensional one-sided cellular automata is undecidable \footnote{Research supported by the Academy of Finland Grant 296018.}}\\
\vspace{0.15cm}
      \large{Joonatan Jalonen\footnote{Author's research supported by the Finnish Cultural Foundation.} and Jarkko Kari}\\
\vspace{0.15cm}
\large{University of Turku}
   \end{center}

%

\begin{abstract}
Two cellular automata are strongly conjugate if there exists a shift-commuting conjugacy between them. We prove that the following two sets of pairs $(F,G)$ of one-dimensional one-sided cellular automata over a full shift are recursively inseparable:
\begin{itemize}
\item[(i)] pairs where $F$ has strictly larger topological entropy than $G$, and
\item[(ii)]  pairs that are strongly conjugate and have zero topological entropy.
\end{itemize}

Because there is no factor map from a lower entropy system to a higher entropy one, and there is no embedding of a higher entropy system into a lower entropy system, we also get as corollaries that the following decision problems are undecidable: Given two one-dimensional one-sided cellular automata $F$ and $G$ over a full shift: Are $F$ and $G$ conjugate? Is $F$ a factor of $G$? Is $F$ a subsystem of $G$? All of these are undecidable in both strong and weak variants (whether the homomorphism is required to commute with the shift or not, respectively). It also immediately follows that these results hold for one-dimensional two-sided cellular automata.
\end{abstract}

\section{Introduction}
The original setting for cellular automata theory was the theory of computation and computability, as cellular automata were created as a mathematical model of natural computational devices. Thus algorithmic questions have always been a significant part of the study of cellular automata. It is known, for example, that surjectivity and injectivity (and so also reversibility) are decidable for one-dimensional cellular automata and undecidable in higher dimensions, and that nilpotency and equicontinuity are undecidable for one- and higher-dimensional cellular automata.

The Curtis-Lyndon-Hedlund Theorem, which says that the classical definition of cellular automata is equivalent to saying that cellular automata are shift commuting endomorphisms of the full shift, prompted the fruitful study of cellular automata as topological dynamical systems. One natural question then is to determine if two cellular automata are conjugate dynamical systems.

Combining both views, one ends up asking if conjugacy of cellular automata is decidable. In \cite{Epperlein2017} it was conjectured that topological conjugacy of one-dimensional cellular automata is undecidable. We prove that this holds for strong and weak conjugacy (whether the conjugacy is required to be shift commuting or not, respectively). In fact we prove a stronger result: Consider sets of pairs $(F,G)$ of  one-dimensional one-sided cellular automata over a full shift such that
\begin{itemize}
\item[(i)] $F$ has strictly larger topological entropy than $G$,
\item[(ii)] $F$ and $G$ are strongly conjugate and both have zero topological entropy.
\end{itemize}
We prove that these sets of pairs are recursively inseparable. The same result then also holds for one-dimensional two-sided cellular automata, too. As an immediate corollary we get that (strong) conjugacy, being a (strong) factor, and being a (strong) subsystem are undecidable properties for one-dimensional one- and two-sided  cellular automata.

\section{Preliminaries}

\subsection{Symbolic dynamics}

Zero is considered a natural number, i.e., $0\in\mathbb{N}$. For two integers $i,j\in\mathbb{Z}$ such that $i < j$ the interval from $i$ to $j$ is denoted $[i,j]=\{i,i+1,\dots,j\}$, we also denote $[i,j)=\{i,i+1,\dots j-1\}$ and $(i,j]=\{i+1,\dots,j\}$. Notation $\mathbb{M}$ is used when it does not matter whether we use $\mathbb{N}$ or $\mathbb{Z}$. Composition of functions $f:X\rightarrow Y$ and $g:Y\rightarrow Z$ is written as $gf$ and $(gf)(x)=g(f(x))$ for all $x\in X$.

The set of infinite sequences over an \emph{alphabet} $A$ indexed by $\mathbb{M}$ is $A^\mathbb{M}$. An element $c\in A^\mathbb{M}$ is a \emph{configuration}. A configuration is a function $\mathbb{M}\rightarrow A$ and we denote $c(i)=c_i$ for $i\in\mathbb{M}$. For any $D\subset\mathbb{M}$ we denote by $c_{D}$ the restriction of $c$ to the domain $D$ and by $A^D$ the set of all functions $D\rightarrow A$. The set of \emph{finite words} is denoted by $A^+=\bigcup_{n\in\mathbb{N}}A^{[0,n]}$. Let $D$ be finite and $u\in A^D$, then we denote $[u]=\{c\in A^\mathbb{M}\mid c_D=u\}$ and call such sets \emph{cylinders}. Let $A$ have the discrete topology and $A^\mathbb{M}$ the product topology. Cylinders form a countable clopen (open and closed) base of this topology. We consider $A^\mathbb{M}$ to be a metric space with the metric
\[
d(c,e) =
\begin{cases}
 2^{-\min\left(\{\lvert i\rvert\mid c_i\neq e_i\}\right)},&\text{ if }c\neq e \\
0,&\text{ if } c=e
\end{cases},
\]
for all $c,e\in A^\mathbb{M}$. It is well-known that this metric induces the product topology, and that this space is compact.

A \emph{(topological) dynamical system} is a pair $(X,f)$ where $X$ is a compact metric space and $f$ a continuous map $X\rightarrow X$. Let $(X,f)$ and $(Y,g)$ be two dynamical systems. A continuous map $\phi:X\rightarrow Y$ is a \emph{homomorphism} if $\phi f = g\phi$. If $\phi$ is surjective, it is a \emph{factor map}, and $(Y,g)$ is a \emph{factor} of $(X,f)$. If $\phi$ is injective, it is an \emph{embedding}, and $(X,f)$ is a \emph{subsystem} of $(Y,g)$. And lastly, if $\phi$ is a bijection, it is a \emph{conjugacy}, and $(X,f)$ and $(Y,g)$ are \emph{conjugate}, denoted by $(X,f)\cong (Y,g)$. Let $\mathcal{U}$ be a finite open cover of $X$, and denote $h(\mathcal{U})$ the smallest number of elements of $\mathcal{U}$ that cover $X$. Let $\mathcal{V}$ be another finite open cover of $X$ and denote $\mathcal{U}\lor\mathcal{V}=\{U\cap V\mid U\in\mathcal{U},V\in\mathcal{V}\}\setminus\{\varnothing\}$. Then the \emph{entropy of $(X,f)$ with respect to $\mathcal{U}$} is
\[
h(X,f,\mathcal{U})=\lim_{n\rightarrow\infty} \frac{1}{n}\log_2 h\big(\mathcal{U}\lor f^{-1}(\mathcal{U})\lor f^{-2}(\mathcal{U})\lor\cdots \lor f^{-n+1}(\mathcal{U})\big).
\]
The \emph{entropy} of $(X,f)$ is
\[
h(X,f)=\sup\{h(X,f,\mathcal{U})\mid \mathcal{U}\text{ is an open cover of }X\}.
\]
We need the following:

\begin{proposition}{(\cite[Proposition 2.88.]{Kurka03})}
\label{entropy-of-subsystem-and-factor}
If $(Y,g)$ is a subsystem or a factor of $(X,f)$, then $h(Y,g)\leq h(X,f)$. It follows that if $(X,f)$ and $(Y,g)$ are conjugate, then $h(X,f)=h(Y,g)$.
\end{proposition}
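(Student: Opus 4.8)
The plan is to reduce everything to two elementary facts about the combinatorics of open covers---how the minimal subcover count $h(\cdot)$ and the entropy $h(X,f,\mathcal{U})$ transform under the structural maps---and then to obtain the conjugacy statement for free by invoking the factor case twice. First I would set up the calculus of covers. For a continuous $\phi\colon X\to Y$ and an open cover $\mathcal{W}$ of $Y$, the family $\phi^{-1}(\mathcal{W})=\{\phi^{-1}(W)\mid W\in\mathcal{W}\}$ is an open cover of $X$, and two identities drive the argument: preimage commutes with join, $\phi^{-1}(\mathcal{W}_1\lor\mathcal{W}_2)=\phi^{-1}(\mathcal{W}_1)\lor\phi^{-1}(\mathcal{W}_2)$ (since $\phi^{-1}$ distributes over intersection and sends $\varnothing$ to $\varnothing$), and, when $\phi$ is a homomorphism so that $\phi f=g\phi$, preimage intertwines the dynamics, $\phi^{-1}(g^{-i}(W))=f^{-i}(\phi^{-1}(W))$, proved by induction from $(g\phi)^{-1}=(\phi f)^{-1}$. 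Together these give $\phi^{-1}\big(\bigvee_{i=0}^{n-1}g^{-i}(\mathcal{W})\big)=\bigvee_{i=0}^{n-1}f^{-i}(\phi^{-1}(\mathcal{W}))$. I also need the monotonicity $h(\phi^{-1}(\mathcal{W}))\le h(\mathcal{W})$, valid for any continuous $\phi$ (pull back a minimal subcover), and the reverse inequality $h(\mathcal{W})\le h(\phi^{-1}(\mathcal{W}))$ when $\phi$ is surjective (a minimal subcover $\{\phi^{-1}(W_j)\}$ of $\phi^{-1}(\mathcal{W})$ pushes forward to a subcover $\{W_j\}$ of $Y$, using surjectivity to reach every point of $Y$).

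For the factor case, let $\phi$ be a factor map and $\mathcal{V}$ an open cover of $Y$. Surjectivity yields $h(\phi^{-1}(\mathcal{A}))=h(\mathcal{A})$ for every open cover $\mathcal{A}$ of $Y$. Applying this to $\mathcal{A}=\bigvee_{i=0}^{n-1}g^{-i}(\mathcal{V})$ and using the intertwining identity gives $h\big(\bigvee_{i=0}^{n-1}f^{-i}(\phi^{-1}(\mathcal{V}))\big)=h\big(\bigvee_{i=0}^{n-1}g^{-i}(\mathcal{V})\big)$, hence $h(X,f,\phi^{-1}(\mathcal{V}))=h(Y,g,\mathcal{V})$ after dividing by $n$ and passing to the limit (which exists by subadditivity, as in the defining formula). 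Since $\phi^{-1}(\mathcal{V})$ is a single open cover of $X$, we get $h(Y,g,\mathcal{V})\le h(X,f)$, and taking the supremum over $\mathcal{V}$ gives $h(Y,g)\le h(X,f)$.

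For the subsystem case I would identify $(Y,g)$ with a closed $f$-invariant subset $Y\subseteq X$ carrying $g=f|_Y$ (the image of an embedding is compact, hence closed in the Hausdorff space $X$, and invariant). The one real subtlety---and the step I expect to be the main obstacle---is that an arbitrary open cover $\mathcal{W}$ of the subspace $Y$ need not be the restriction of a cover of $X$. To repair this I write each $W\in\mathcal{W}$ as $W=O_W\cap Y$ with $O_W$ open in $X$ and form $\mathcal{U}=\{O_W\mid W\in\mathcal{W}\}\cup\{X\setminus Y\}$, an open cover of $X$ whose restriction to $Y$ is exactly $\mathcal{W}$. Restriction to $Y$ commutes with $f^{-1}$ and with $\lor$ in the same way preimage did, and $h(\mathcal{A}|_Y)\le h(\mathcal{A})$ for every cover $\mathcal{A}$ of $X$; this yields $h(Y,g,\mathcal{W})\le h(X,f,\mathcal{U})\le h(X,f)$, and a supremum over $\mathcal{W}$ finishes $h(Y,g)\le h(X,f)$.

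Finally, a conjugacy $\phi$ is in particular a surjective homomorphism, so the factor case gives $h(Y,g)\le h(X,f)$; its inverse $\phi^{-1}$ is also a conjugacy, hence a factor map from $(Y,g)$ onto $(X,f)$, giving $h(X,f)\le h(Y,g)$. Combining the two inequalities yields $h(X,f)=h(Y,g)$, completing the proof.
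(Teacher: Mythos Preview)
The paper does not prove this proposition; it merely quotes it from K\r{u}rka's book \cite[Proposition~2.88]{Kurka03} and uses it as a black box. Your argument is the standard textbook proof and is correct---the cover calculus, the surjectivity argument for factors, the extension of a subspace cover by adjoining $X\setminus Y$ for subsystems, and the two-way application for conjugacies are all sound---so there is nothing in the paper to compare it against.
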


The \emph{direct product} of dynamical systems $(X,f)$ and $(Y,g)$ is $(X\times Y, f\times g)$, where $f\times g:X\times Y\rightarrow X\times Y,\;(f\times g)(x,y)=(f(x),g(y))$. It is known that $h(X\times Y,f\times g)=h(X,f)+h(Y,g)$ (\cite[Proposition 2.89]{Kurka03}).

The \emph{shift map} $\sigma:A^\mathbb{M}\rightarrow A^\mathbb{M}$, defined by $\sigma(c)_i=c_{i+1}$ for all $i\in\mathbb{M}$, is easily seen to be continuous. The dynamical system $(A^\mathbb{M},\sigma)$ is the \emph{full ($A$-)shift}. A dynamical system $(X,\sigma)$, where $X\subset A^\mathbb{M}$ is topologically closed and $\sigma^m(X)\subset X$ for all $m\in\mathbb{M}$, is a \emph{subshift}. When it does not cause confusion, we will simply talk about a subshift $X$. A configuration $c\in A^\mathbb{M}$ \emph{avoids} $u\in A^{[0,n)}$ if $\sigma^i(c)_{[0,n)}\neq u$ for all $i\in \mathbb{M}$. Let $S\subseteq A^+$, and let $X_S$ be the set of configurations that avoid $S$, i.e., $X_S=\{c\in A^\mathbb{M}\mid \forall u\in S: c\text{ avoids }u\}$. It is well-known that the given topological definition of subshifts is equivalent to saying that there exists a set of forbidden words $S$ such that $X=X_S$. If there exists a finite set $S$ such that $X=X_S$, then $X$ is a \emph{subshift of finite type (SFT)}. If $Y$ is a factor of an SFT, then it is a \emph{sofic shift}. An equivalent characterization of sofic shifts is that the set of forbidden words is a regular language.

The \emph{subword complexity (of length $n$)} of a subshift $X$ is $p_n(X) =  \lvert \{u\in A^+\mid \exists c\in X: c_{[0,n)}=u\}\rvert$. The entropy of $(X,\sigma)$ can be calculated using the subword complexity
\[
h(X,\sigma)=\lim_{n\rightarrow\infty} \frac{1}{n}\log_2( p_n(X)).
\]

\subsection{Cellular automata}

A \emph{cellular automaton (CA)} is a dynamical system $(X,F)$ where $X\subset A^\mathbb{M}$ is a subshift and $F$ commutes with the shift map, i.e., $F\sigma = \sigma F$. In this paper we will only consider CA's over a full shift, i.e., $X=A^\mathbb{M}$. When $\mathbb{M}=\mathbb{N}$, the CA is called \emph{one-sided} and when $\mathbb{M}=\mathbb{Z}$, the CA is called \emph{two-sided}. We will often refer to a CA by the function alone, i.e., talk about the CA $F$, and in a similar fashion we often omit the phase space from notations, for example write $h(F)=h(A^\mathbb{M},F)$ for the entropy. Let $D=[i,j]\subset\mathbb{M}$ and let $G_l:A^D\rightarrow A$. Define $G:A^\mathbb{M}\rightarrow A^\mathbb{M}$ by $G(c)_i = G_l((\sigma^i(c))_D)$. It is easy to see that $G$ is continuous and commutes with $\sigma$, so it is a cellular automaton. The set $D$ is the \emph{local neighborhood} of $G$ and the function $G_l$ is the \emph{local rule} of $G$. According to the Curtis-Hedlund-Lyndon Theorem every CA is defined by a local rule. We will denote the local and global rules with the same $G$, this will not cause confusion. Let $r\in\mathbb{N}$ be the smallest number such that $D\subseteq[-r,r]$, then $r$ is the \emph{radius} of $G$.

Let $(A^\mathbb{M},F)$ and $(B^\mathbb{M},G)$ be two CA's. If $H:A^\mathbb{M}\rightarrow B^\mathbb{M}$ is a homomorphism from $(A^\mathbb{M},F)$ to $(B^\mathbb{M},G)$, and also a homomorphism from $(A^\mathbb{M},\sigma)$ to $(B^\mathbb{M},\sigma)$ then it is a \emph{strong homomorphism}. Naturally we define \emph{strong factor}, \emph{strong subsystem}, and \emph{strongly conjugate}, when the corresponding homomorphism is a strong homomorphism. If $F$ and $G$ are strongly conjugate, we denote $F\scong G$. Notice that if $\phi$ is a strong conjugacy from $(A^\mathbb{M},F)$ to $(B^\mathbb{M},G)$, then automatically $\phi^{-1}$ is also strong, i.e., commutes with $\sigma$ (see, e.g., \cite{Kari05}).

For every $n\in\mathbb{N}$, CA $(A^\mathbb{M},F)$ defines the \emph{$n$\textsuperscript{th} trace subshift}
\[
\tau_n(F)=\big\{e\in \big(A^{n}\big)^\mathbb{N}\mid \exists c\in A^\mathbb{M}:\forall i\in\mathbb{N}: e_i=\big(F^i(c)\big)_{[0,n)}\big\}.
\]
The entropy of $F$ can be calculated as the limit of the entropies of its trace subshifts
\[
h(F) = \lim_{n\rightarrow\infty} h(\tau_n(F),\sigma).
\]
For a one-sided cellular automaton $F$ with radius $r$ we have that $p_n(\tau_{r+1}(F))=\lvert A\rvert\cdot p_n(\tau_r(F))$, so we get the following:

\begin{proposition}
\label{upper-bound-on-entropy}
Let $F:A^\mathbb{N}\rightarrow A^\mathbb{N}$ be a CA with radius $r$. Then $h(F)=h(\tau_r(F),\sigma)$.
\end{proposition}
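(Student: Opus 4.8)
The plan is to reduce the claim to the subword-complexity identity stated just above together with a single application of the complexity formula for the entropy of a subshift. Recall that $h(F)=\lim_{m\to\infty}h(\tau_m(F),\sigma)$ (renaming the width index to $m$) and that $h(\tau_m(F),\sigma)=\lim_{n\to\infty}\frac{1}{n}\log_2 p_n(\tau_m(F))$. Hence it suffices to prove that $h(\tau_m(F),\sigma)=h(\tau_r(F),\sigma)$ for every width $m\geq r$; letting $m\to\infty$ in the first limit then yields $h(F)=h(\tau_r(F),\sigma)$.

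First I would upgrade the stated identity $p_n(\tau_{r+1}(F))=|A|\,p_n(\tau_r(F))$ to the whole family
\[
p_n(\tau_{m+1}(F))=|A|\,p_n(\tau_m(F)),\qquad m\geq r,\ n\in\mathbb{N}.
\]
This is the step where the radius hypothesis is really used: since $F$ has radius $r$, its local neighborhood is contained in $[0,m]$ for every $m\geq r$, and the counting argument behind the stated identity uses only that the neighborhood sits inside the window of width $m$, not that $r$ is the \emph{exact} radius. Iterating these identities from width $r$ up to width $m$ then gives the closed form $p_n(\tau_m(F))=|A|^{\,m-r}\,p_n(\tau_r(F))$ for all $m\geq r$ and all $n$.

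With this closed form in hand the entropy computation is immediate: for each fixed $m\geq r$,
\[
h(\tau_m(F),\sigma)=\lim_{n\to\infty}\frac{1}{n}\log_2 p_n(\tau_m(F))=\lim_{n\to\infty}\left(\frac{(m-r)\log_2|A|}{n}+\frac{1}{n}\log_2 p_n(\tau_r(F))\right)=h(\tau_r(F),\sigma),
\]
because the additive constant $(m-r)\log_2|A|$ is annihilated by the factor $1/n$. Thus $h(\tau_m(F),\sigma)$ is constant in $m$ for $m\geq r$, equal to $h(\tau_r(F),\sigma)$, and passing to the limit in $h(F)=\lim_{m\to\infty}h(\tau_m(F),\sigma)$ finishes the proof.

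I expect the only genuine obstacle to be the justification of the extended family of identities for all $m\geq r$ rather than merely for $m=r$: one has to check that the combinatorial argument producing $p_n(\tau_{r+1})=|A|\,p_n(\tau_r)$ is insensitive to replacing $r$ by any larger window width, depending only on the neighborhood lying inside the window. Once that is granted, the telescoping to the closed form and the vanishing of the constant term under the $1/n$ averaging are entirely routine.
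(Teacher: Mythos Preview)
Your proposal is correct and follows exactly the route the paper sketches: the paper's entire argument is the one-line observation preceding the proposition, namely $p_n(\tau_{r+1}(F))=|A|\cdot p_n(\tau_r(F))$, and you have simply spelled out (correctly) that this extends to all window widths $m\ge r$ and then passed to the limit. Your remark that the counting argument only needs the neighborhood to sit inside the window---not that $r$ be minimal---is precisely the point that makes the induction go through.
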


Let $F:A^\mathbb{M}\rightarrow A^\mathbb{M}$ and $G:B^\mathbb{M}\rightarrow B^\mathbb{M}$ be two CA's. There are two natural ways to interpret the direct product of $F$ and $G$. First we can consider $F\times G$ to be a CA that has two separate \emph{tracks} $A^\mathbb{M}$ and $B^\mathbb{M}$, and $F\times G$ operates on the $A$-track via $F$ and on the $B$-track via $G$. On the other hand we can also consider $F\times G$ as a CA on $(A\times B)^\mathbb{M}$, where the states have two \emph{layers}. For any $F\times G$ we use which ever interpretation seems more natural, sometimes switching between the two. We can of course define a CA over $(A\times B)^\mathbb{M}$ that is not a direct product of two CA's, for such a CA we will also talk about tracks and layers.

Let $F$ be a CA. If there exist $n,p>0$ such that $F^{n+p}=F^n$, then $F$ is \emph{eventually periodic}, and if there exists $p>0$ such that $F^p=\id$, then $F$ is \emph{periodic}. For a state $a\in A$ we denote ${^\omega a}^\omega\in A^\mathbb{M}$ the configuration such that ${^\omega a}^\omega (i)=a$ for all $i\in \mathbb{M}$. A state $q\in A$ is \emph{quiescent} if $F({^\omega q}^\omega)={^\omega q}^\omega$. A cellular automaton is \emph{nilpotent} if there exists a quiescent state $q$ such that for every $c\in A^\mathbb{M}$ there exists $n\in\mathbb{N}$ such that $F^n(c)={^\omega q}^\omega$. A state $s\in A$ is \emph{spreading} if the local rule maps every neighborhood containing $s$ to $s$. Clearly a spreading state is quiescent. It is known that for cellular automata nilpotency implies uniform nilpotency:

\begin{proposition}{(\cite{CulikPachlYu89})}
\label{uniformlyNilpotent}
Let $F:A^\mathbb{M}\rightarrow A^\mathbb{M}$ be a nilpotent CA. Then there exists $n\in\mathbb{N}$ such that for all $c\in A^\mathbb{M}$ it holds that $F^n(c)={^\omega q}^\omega$.
\end{proposition}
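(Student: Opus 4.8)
The plan is to argue by contradiction via compactness, but the naive topological version of this argument fails and must be replaced by one that genuinely uses the bounded radius of $F$ together with the quiescence of $q$. First I would record the elementary observation that, since $q$ is quiescent, ${^\omega q}^\omega$ is a fixed point, so the preimages $U_n=(F^n)^{-1}({^\omega q}^\omega)$ form an increasing chain of \emph{closed} sets whose union is all of $A^\mathbb{M}$ by the hypothesis of nilpotency. One is then tempted to invoke compactness to conclude that some $U_n$ is everything, but this does \emph{not} work: an increasing chain of closed sets covering a compact space need not stabilize. This is exactly the main obstacle, and it signals that the proof must exploit the local structure of $F$ rather than topology alone.

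The structural tool I would isolate is a \emph{back-tracing} property. If $F^{t+1}(c)_i\neq q$, then the local rule cannot have read an all-$q$ neighborhood at position $i$ at time $t$, since quiescence would then force the output $q$; hence there is a position $j$ with $\lvert i-j\rvert\leq r$ and $F^t(c)_j\neq q$. Iterating this downward from time $t$ to time $0$ produces a \emph{survival path} of non-$q$ cells whose consecutive positions differ by at most $r$. This is the place where the finite radius enters decisively.

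With this in hand, suppose $F$ is not uniformly nilpotent. Then for every $n$ there is a configuration with $F^n(\,\cdot\,)\neq{^\omega q}^\omega$; using shift-invariance $F\sigma=\sigma F$ I can move a witnessing non-$q$ cell to position $0$ at time $n$, back-trace a survival path down to time $0$, and then shift \emph{once more} so that the path starts at position $0$. This yields configurations $e^{(n)}$ each carrying a survival path whose position at any time $t$ lies in the bounded interval $[-rt,rt]\cap\mathbb{M}$. The re-anchoring is the crucial move: anchoring the \emph{start} of the path keeps the time-$t$ positions bounded uniformly in $n$, which is what makes a limiting configuration meaningful (anchoring only the endpoint at time $n$ would let the relevant positions escape to infinity).

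Finally I would pass to a convergent subsequence $e^{(n_k)}\to e$ by compactness of $A^\mathbb{M}$. Fix $t$; the survival-path positions at time $t$ all lie in the finite set $[-rt,rt]\cap\mathbb{M}$, so by pigeonhole some single position $p$ is non-$q$ at time $t$ for infinitely many $k$. Since $F^t(\,\cdot\,)_p$ depends on only finitely many coordinates, continuity gives $F^t(e)_p=\lim_k F^t(e^{(n_k)})_p\neq q$, hence $F^t(e)\neq{^\omega q}^\omega$. As $t$ is arbitrary, $e$ never reaches ${^\omega q}^\omega$, contradicting nilpotency and completing the proof. I expect the delicate points to be the correct bookkeeping of the two shifts (first to place, then to anchor, the survival path) and the per-level pigeonhole that converts the pointwise-in-$n$ witnesses into the single immortal configuration $e$.
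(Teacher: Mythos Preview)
The paper does not supply its own proof of this proposition; it merely cites \cite{CulikPachlYu89}. Your survival-path compactness argument is the standard one and is correct as written when $\mathbb{M}=\mathbb{Z}$.

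There is, however, a real gap in the one-sided case $\mathbb{M}=\mathbb{N}$, exactly at the second shift. A one-sided CA has its neighborhood contained in $[0,r]$, so back-tracing from position $i$ at time $t+1$ lands at some $j\in[i,i+r]$ at time $t$: the path positions $p_0\ge p_1\ge\cdots\ge p_n=0$ are \emph{non-increasing} in time. When you apply $\sigma^{p_0}$ to anchor the start at $0$, the path position at time $t$ becomes $p_t-p_0\le 0$, so for every $t$ with $p_t<p_0$ the witnessing non-$q$ cell has been shifted off the left end of $\mathbb{N}$ and is simply gone; the configuration $e^{(n)}$ need not carry any survival path beyond time $0$, and the set $[-rt,rt]\cap\mathbb{N}=[0,rt]$ need not contain a non-$q$ cell of $F^t(e^{(n)})$ at all. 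A clean repair that works uniformly in $\mathbb{M}$ is Baire category: the closed increasing sets $V_n=\{c:F^n(c)={^\omega q}^\omega\}$ cover $A^{\mathbb{M}}$, so some $V_n$ contains a cylinder $[w]$; for any position $j$ whose $F^n$-neighborhood is disjoint from the support of $w$ one obtains $F^n(c)_j=q$ for \emph{every} $c$, and since this is a statement about the local rule of $F^n$ it transfers to every position, giving $F^n\equiv{^\omega q}^\omega$.
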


We also need the following, which is a result of a simple compactness argument.

\begin{proposition}
\label{spacetime-diagram-without-spreading-state}
Let $F:A^\mathbb{M}\rightarrow A^\mathbb{M}$ be a CA that is not nilpotent, and let $s\in A$ be a spreading state. Then there exists $c\in A^\mathbb{M}$ such that $F^n(c)_j\neq s$ for all $n\in\mathbb{N}$ and $j\in\mathbb{M}$.
\end{proposition}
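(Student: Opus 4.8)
The plan is to prove the contrapositive: assuming that \emph{every} configuration has the state $s$ somewhere in its space-time diagram, I will show that $F$ is nilpotent. Let $r\ge 1$ be the radius of $F$, and recall the single structural fact I will exploit: since $s$ is spreading and hence quiescent, whenever $F^t(c)_p=s$ we also have $F^{t+1}(c)_{p'}=s$ for every $p'$ whose neighborhood contains $p$. Iterating, an occurrence of $s$ at position $p$ and time $t$ persists and forces $s$ on a spatial interval around $p$ that widens by at least $r\ge1$ per time step.

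First I would carry out the compactness step. For each $n$ set
\[
O_n=\{c\in A^\mathbb{M}\mid \exists\, 0\le i\le n,\ \exists\, j\in[-n,n]\cap\mathbb{M}:\ F^i(c)_j=s\}.
\]
Each $O_n$ is clopen, being a finite union of preimages $(F^i)^{-1}(\{e\mid e_j=s\})$ of clopen cylinders under the continuous maps $F^i$, and the family is increasing. The standing assumption says $\bigcup_n O_n=A^\mathbb{M}$, so by compactness of $A^\mathbb{M}$ this clopen cover has a finite subcover, and since the $O_n$ are nested there is a single $N$ with $O_N=A^\mathbb{M}$. This yields a \emph{uniform} window: every configuration exhibits $s$ somewhere inside the finite region $\{0,\dots,N\}\times\big([-N,N]\cap\mathbb{M}\big)$ of its space-time diagram.

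The key step is to upgrade this single-window statement, via shift-commutativity, to a statement holding everywhere in space at once. Applying $O_N=A^\mathbb{M}$ to $\sigma^m(c)$ and using $F^i\sigma^m=\sigma^m F^i$, I get that for every $c$ and every $m$ the spatial window $[m-N,m+N]\cap\mathbb{M}$ contains an occurrence of $s$ at some time $i\le N$. Hence the positions at which $s$ appears within the first $N$ steps meet every interval of length $2N+1$, so consecutive such positions are at distance at most $2N+1$. Now spreading finishes the job: each such occurrence, present by time $N$, has by time $2N+1$ grown into an interval of $s$'s of radius at least $r(2N+1-N)\ge N+1$ around its source. Since the sources are $(2N+1)$-syndetic, these intervals overlap and their union is all of $\mathbb{M}$, whence $F^{2N+1}(c)={}^\omega s^\omega$. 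As $2N+1$ does not depend on $c$, this gives $F^{2N+1}\equiv{}^\omega s^\omega$; in particular $F$ is nilpotent (indeed uniformly so, so Proposition~\ref{uniformlyNilpotent} is not even needed), completing the contrapositive. For one-sided $F$ the only change is that the neighborhood, and hence the spreading, is one-directional, but the identical syndeticity-plus-coverage count shows every position turns into $s$ within a uniformly bounded time.

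I expect the main obstacle to be precisely this passage from ``bounded window'' to ``uniform nilpotency.'' A lone occurrence of $s$ is useless on an infinite line, since its expanding cone never fills $\mathbb{M}$ in finite time; the argument therefore genuinely needs \emph{both} compactness (to make the window size $N$ uniform over all configurations) \emph{and} shift-commutativity (to convert the single guaranteed occurrence into a spatially syndetic family of occurrences) before the spreading geometry can be applied. The continuity and clopen-set bookkeeping is routine; the only care required is in tracking the constants relating the window size $N$, the radius $r$, and the merging time $2N+1$.
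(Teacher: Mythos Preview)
The paper does not actually prove this proposition; it only remarks that it ``is a result of a simple compactness argument.'' Your contrapositive argument is correct and is indeed a compactness argument, so it supplies exactly what the paper omits.

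That said, there is a shorter direct route that is probably what the authors have in mind and that avoids both the contrapositive and the syndeticity bookkeeping. Set $V_n=\{c\in A^{\mathbb{M}}\mid F^n(c)_0\neq s\}$. These sets are clopen and, because $s$ is spreading (hence quiescent), nested: $F^n(c)_0=s$ forces $F^{n+1}(c)_0=s$, so $V_{n+1}\subseteq V_n$. If some $V_n$ were empty then $F^n(c)_0=s$ for all $c$, and applying this to $\sigma^j(c)$ gives $F^n(c)_j=s$ for all $c$ and all $j\in\mathbb{M}$, i.e.\ $F$ is nilpotent. Hence non-nilpotency makes every $V_n$ nonempty, and compactness yields $c\in\bigcap_n V_n$. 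Finally, if $F^n(c)_j=s$ for some $n,j$, the spreading of $s$ would carry it to column~$0$ in finitely many steps (leftward propagation suffices in the one-sided case), contradicting $c\in\bigcap_n V_n$; so $F^n(c)_j\neq s$ for all $n,j$. Your approach has the pleasant by-product of reproving uniform nilpotency along the way, at the cost of tracking the window size and the merging time; the direct approach is leaner but uses the same three ingredients (compactness, shift-commutation, spreading geometry) in the same roles.
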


Consider a one-sided reversible cellular automaton $F:A^{\mathbb{N}}\rightarrow A^{\mathbb{N}}$ such that both $F$ and its inverse $F^{-1}$ have radius $1$. In many cases this restriction for radius is not a serious one as every reversible CA is conjugate (though maybe not \emph{strongly} conjugate) to such a CA through some grouping of cells. It is easy to see that for every fixed $a\in A$ the map $F(\_a):A\rightarrow A,x\mapsto F(xa)$ has to be a permutation, we will denote this permutation with $\rho_a$. Not every set of permutations $\{\rho_a\}_{a\in A}$ define a reversible CA. We refer the reader to \cite{DartnellMaassSchwartz2003} for a detailed combinatorial considerations of such reversible one-sided CA's. For our purposes the following simple example will be enough.

\begin{example}
\label{021example}
Define a one-sided CA $F:A^\mathbb{N}\rightarrow A^\mathbb{N}$ where $A=\{0,1,2\}$ using the following permuations:
\[
\begin{array}{lr}
&0\mapsto 0\\
\rho_0=\rho_2:&1\mapsto 2\\
&2\mapsto 1\\
\end{array}
\qquad
\begin{array}{lr}
&0\mapsto 1\\
\rho_1:&1\mapsto 2\\
&2\mapsto 0\\
\end{array}.
\]
This is reversible, and its inverse also has radius one, namely the permutations $\pi_0=\pi_1=(0)(12),\pi_2=(021)$ can be verified to define the inverse of $F$. This example was already considered in \cite{DartnellMaassSchwartz2003}. We will compute its entropy.

According to Proposition \ref{upper-bound-on-entropy} the entropy is just the entropy of the subshift $\tau_1(F)$. From the local rule we see that $0$ maps to $0$ or $1$, $1$ always maps to $2$, and $2$ maps to $0$ or $1$. So $\tau_1(F)\subseteq\{0,12\}^\mathbb{Z}$ (which is here considered a subshift of $\{0,1,2\}^\mathbb{Z}$). Suppose $20^n1$ is a factor of some element in $\tau_1(F)$. Notice that the only word of length $n-2$ that can appear next to $20^n1$ in the space-time-diagram of $F$ is $20^{n-2}1$ (consider this with the help of Figure \ref{fig:201}). Inductively this implies that if $20^n1$ is a factor of some element in $\tau_1(F)$ then $n$ is even. So we have that $\tau_1(F)\subseteq\{00,12\}^\mathbb{Z}$. But for any $t\in\{00,12\}^\mathbb{Z}$ we can construct a valid space-time-diagram of $F$ that contains $t$ as follows: Consider $00$ to represent zero and $12$ to represent one, and let $t_1$ be xor of $t$ (turn to Figure \ref{fig:xorlike}). We see that when lined up correctly this gives a configuration that is locally compatible with $t$, i.e., that they could be successive columns of a space-time-diagram of $F$. This process can be repeated to obtain a valid space-time-diagram of $F$.

We have seen that $\tau_1(F)=\{00,12\}^\mathbb{Z}$, and so $h(F)=\frac{1}{2}$. Using the direct product construction we can obtain a one-sided reversible CA that has radius one, and whose inverse also has radius one, and that has arbitrarily high entropy.

\begin{figure}
\centering
\begin{minipage}{.45\textwidth}
  \centering
  \includegraphics{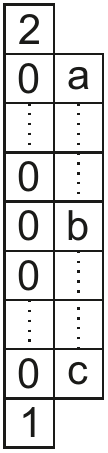}
  \caption{First notice that $c$ has to be $1$, since only $\rho_1$ maps $0$ to $1$. The same way $a$ has to be $2$, since only $\pi_2$ maps $0$ to $2$. Finally $b$ has to be $0$, since $b$ has to satisfy $\rho_b(0)=0$ and $\pi_b(0)=0$.}
  \label{fig:201}
\end{minipage}%
\hfill
\begin{minipage}{.45\textwidth}
  \centering
  \includegraphics{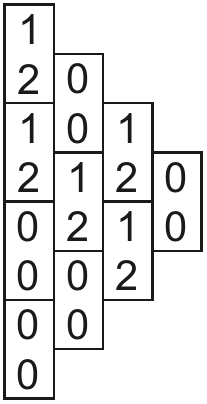}
  \caption{Fill the leftmost column in an arbitrary way using the blocks $00$ and $12$. Fill the next column by taking xor (addition modulo $2$) interpreting $00$ as $0$ and $12$ as $1$. Notice that we get no violations of the local rule of $F$ doing this. Repeat. }
  \label{fig:xorlike}
\end{minipage}
\end{figure}
\end{example}

For an overview of the topics considered here, we refer the reader to \cite{Kari05} (a survey of cellular automata theory), and \cite{Kurka03} (a book on topological and symbolic dynamics).

\section{Main result}

Our proof is based on the undecidability of nilpotency of one-dimensional cellular automata.

\begin{theorem}{(\cite{Kari92},\cite{Lewis1979})}
\label{nilpotency}
Nilpotency of one-dimensional one-sided cellular automata with a spreading state and radius $1$ is undecidable.
\end{theorem}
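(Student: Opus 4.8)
This theorem is classical, so the plan is to reproduce the standard reduction from the \emph{domino (Wang tiling) problem}. I would start from the known fact that there is a finite set $T$ of Wang tiles which is \emph{deterministic} with respect to a pair of opposite corners---say each tile is uniquely determined by the colours it must share with its north and west neighbours---and for which it is undecidable whether $T$ admits a valid tiling of the whole plane $\mathbb{Z}^2$. Producing such a deterministic tile set while keeping the tiling problem undecidable is the genuinely hard ingredient; I would cite it rather than rebuild it. The point of determinism is that I want to read a tiling not row by row but \emph{along diagonals}: with the appropriate corner convention, the two tiles that determine a given tile both lie on the adjacent diagonal, at the same and an adjacent spatial index, so that passing from one diagonal to the next is a \emph{local, parallel} operation of radius $1$.

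Given such a $T$, I would define the CA $F$ on the alphabet $A=T\cup\{s\}$, where $s$ is a fresh \emph{spreading} state, with neighbourhood $\{0,1\}$, so that $F$ is one-sided of radius $1$ once the corner convention is chosen to make all offsets non-negative (which also removes any worry about the boundary cell $0$, since cell $i$ only reads cells $i$ and $i+1$). Interpreting a configuration as one diagonal of a candidate tiling, the local rule reads the tiles at positions $i$ and $i+1$: if they are compatible and there is a (then unique, by determinism) tile $t'$ that legally continues them on the next diagonal, the new state of cell $i$ is $t'$; in every other case, and whenever $s$ appears in the window, the new state is $s$. By construction $s$ spreads, and a space-time diagram avoids $s$ precisely when its successive diagonals assemble into a tiling with no mismatched edge anywhere.

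Correctness is then a two-way argument using Proposition~\ref{spacetime-diagram-without-spreading-state}. If $T$ tiles the plane, I take one diagonal of a valid tiling as the initial configuration; determinism forces the whole space-time diagram to reproduce the tiling, so $s$ never appears, and since the uniform configuration ${}^{\omega}s^{\omega}$ is fixed, Proposition~\ref{uniformlyNilpotent} shows $F$ cannot be nilpotent. Conversely, if $F$ is not nilpotent, Proposition~\ref{spacetime-diagram-without-spreading-state} provides a configuration whose entire space-time diagram avoids $s$; reading that diagram back as a family of diagonals gives a tile assignment on a quadrant in which, by definition of the rule, every horizontal and vertical edge matches, i.e.\ a valid tiling of a quarter-plane. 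A routine compactness (K\"onig's lemma) argument, using that a quadrant tiling contains arbitrarily large square patches, upgrades this to a tiling of the whole plane. Hence $F$ is non-nilpotent if and only if $T$ tiles the plane, and undecidability transfers to nilpotency.

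I expect the main obstacle to be exactly the deterministic tile set: ordinary undecidable tile sets are not corner-deterministic, and forcing determinism without destroying undecidability is the delicate, construction-heavy step, which is the heart of Kari's argument. By comparison, the diagonal reading, the verification that the resulting rule has radius $1$ and is one-sided, the spreading behaviour of $s$, and the compactness passage from the quadrant to the plane are all routine once the tile set is in hand.
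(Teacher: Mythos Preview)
The paper does not prove Theorem~\ref{nilpotency}; it merely quotes it from \cite{Kari92} and \cite{Lewis1979} as a known result and then uses it as the source of undecidability in the reduction of Theorem~\ref{inseparablePairs}. Your sketch is a faithful outline of Kari's original argument: the reduction from the domino problem via an NW-deterministic tile set (the genuinely hard ingredient, which you rightly propose to cite), the diagonal reading that turns the passage between diagonals into a one-sided radius-$1$ local rule, the fresh spreading error state, and the compactness upgrade from a quadrant space-time diagram to a full-plane tiling are exactly the ingredients of \cite{Kari92}. One minor remark: invoking Proposition~\ref{uniformlyNilpotent} in the forward direction is superfluous---once you exhibit a configuration whose orbit never contains $s$, non-nilpotency follows directly from the definition, because ${}^{\omega}s^{\omega}$ is a fixed point and hence any quiescent state witnessing nilpotency would have to be $s$ itself.
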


Now we can prove the main result of this paper.

\begin{theorem}
\label{inseparablePairs}
The following two sets of pairs of one-dimensional one-sided cellular automata are recursively inseparable:
\begin{enumerate}[(i)]
\item pairs where the first cellular automaton has strictly higher entropy than the second one, and
\item pairs that are strongly conjugate and both have zero topological entropy.
\end{enumerate}
\end{theorem}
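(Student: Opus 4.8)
The plan is to reduce from the undecidability of nilpotency (Theorem \ref{nilpotency}). Recall that two disjoint sets are recursively inseparable if no recursive set contains one and is disjoint from the other; since nilpotency of radius-$1$ one-sided CA with a spreading state is undecidable, the set of nilpotent such CA and the set of non-nilpotent such CA are already recursively inseparable (a recursive separator would decide nilpotency). I will therefore exhibit a computable map $N\mapsto(F,G)$, from radius-$1$ one-sided CA $N$ with spreading state $s$ to pairs of one-sided CA, such that:
\begin{itemize}
\item[] if $N$ is nilpotent, then $(F,G)$ lands in set (ii), i.e.\ $F\scong G$ and $h(F)=h(G)=0$; and
\item[] if $N$ is not nilpotent, then $(F,G)$ lands in set (i), i.e.\ $h(F)>h(G)$.
\end{itemize}
Any recursive set separating (i) from (ii) would, pulled back along this computable reduction, yield a recursive set separating the non-nilpotent from the nilpotent $N$, which is impossible; this gives the theorem.

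For the construction, let $R$ be a one-sided reversible CA whose inverse also has radius $1$ and whose entropy exceeds $\log_2|A_N|$; such $R$ exists by taking enough direct-product copies of the CA of Example \ref{021example}, and the number of copies is computable from $|A_N|$. Note $h(N)=h(\tau_1(N),\sigma)\le\log_2|A_N|$ by Proposition \ref{upper-bound-on-entropy}, since $\tau_1(N)$ is a subshift over $A_N$. Work on $A_N\times A_R$. Let $G=N\times\id$ (that is, $N$ on the first track and the identity on the second); being a direct product, $h(G)=h(N)+0=h(N)$ in all cases. Let $F$ run $N$ on the first track and run $R$ on the second track \emph{gated} by the spreading state: the $R$-rule is applied at a cell as long as the first track there is not $s$, and the second track is permanently frozen from the moment $s$ occupies that cell (permanence holding because $s$ is quiescent and spreading).

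The entropy bookkeeping is then immediate. If $N$ is not nilpotent, Proposition \ref{spacetime-diagram-without-spreading-state} gives a configuration whose $N$-orbit never contains $s$; the set $X_s$ of all $s$-avoiding configurations is closed and $N$-invariant, and on $X_s\times A_R^{\mathbb{N}}$ the gate never fires, so $F$ acts there exactly as the direct product $(X_s,N)\times(A_R^{\mathbb{N}},R)$. This is a subsystem of $F$, so by Proposition \ref{entropy-of-subsystem-and-factor} and the product formula $h(F)\ge h(X_s,N)+h(R)\ge h(R)>\log_2|A_N|\ge h(N)=h(G)$, placing $(F,G)$ in set (i). If $N$ is nilpotent, then $h(G)=h(N)=0$, and once I establish $F\scong G$ it follows that $h(F)=h(G)=0$ by Proposition \ref{entropy-of-subsystem-and-factor}, placing $(F,G)$ in set (ii).

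It remains to produce the strong conjugacy in the nilpotent case. By uniform nilpotency (Proposition \ref{uniformlyNilpotent}) there is $n_0$ with $N^{n_0}\equiv{}^\omega s^\omega$, so on every configuration the second track freezes within $n_0$ steps. Define $\phi(x,y)=(x,y_\infty)$, where $y_\infty$ is the (well-defined, radius-$n_0$) frozen second-track value obtained by running $F$'s gated $R$-dynamics to completion. Since freezing an already-frozen track does nothing, one checks directly that $\phi F=G\phi$, and $\phi$ manifestly commutes with $\sigma$; thus $\phi$ is a strong homomorphism $F\to G$, and the only remaining point is that $\phi$ is a homeomorphism. This is the crux: it amounts to showing the gated map $y\mapsto y_\infty$ is a bijection, i.e.\ that running the reversible rule $R$ subject to permanent, spreading-state-controlled freezing can be uniquely undone. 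I expect this to be the main obstacle, and plan to settle it using the reversibility of $R$ together with the rigid structure forced by $s$ (the frozen region grows monotonically, as $s$ is permanent and spreads), inverting the $R$-dynamics cell by cell through the space-time diagram; since $\phi$ is then a continuous bijection of a compact space, its inverse is automatically continuous, completing the conjugacy.
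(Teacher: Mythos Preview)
Your approach is essentially the paper's: both reduce from nilpotency by building two-track CA where one track runs the given CA and the other runs either the identity (for $G$) or a high-entropy reversible radius-$1$ CA gated by the spreading state (for $F$), with the conjugacy in the nilpotent case given by running the gated track to completion. The injectivity of $\phi$---which you correctly flag as the crux---is handled in the paper exactly along the lines you propose, via a short case analysis at each cell on whether the gate fired at positions $i$ and $i+1$, applying $\rho_{\,\cdot\,}^{-1}$ when the right neighbour is frozen and the local rule of $R^{-1}$ when it is not.
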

\begin{proof}
We will reduce the decision problem of Theorem \ref{nilpotency} to this problem, which will prove our claim.

Let $H:B^\mathbb{N}\rightarrow B^\mathbb{N}$ be an arbitrary given one-sided CA with neighborhood radius $1$ and a spreading quiescent state $q\in B$. Let $k\in\mathbb{N}$ be such that $k>\log_2(\lvert B\rvert)$, $F_{2k}$ be the $2k$-fold cartesian product of the cellular automaton $F$ of Example \ref{021example}, and $A=\{0,1,2\}^{2k}$ (we are aiming for high enough entropy). Now we are ready to define CA's $\mathcal{F}$ and $\mathcal{G}$ such that
\begin{align*}
H\text{ is not nilpotent}&\implies h(\mathcal{F})>h(\mathcal{G})\\
H\text{ is nilpotent}&\implies \mathcal{F}\scong \mathcal{G}\text{ and }h(\mathcal{F})=h(\mathcal{G})=0.
\end{align*}
Both of these new CA's work on two tracks $\mathcal{F},\mathcal{G}:\left(A\times B\right)^\mathbb{N}\rightarrow \left(A\times B\right)^\mathbb{N}$. The CA $\mathcal{G}$ is simply $\id_A\times H$, i.e.,
\[
\mathcal{G}((a_0,b_0)(a_1,b_1))=(a_0,H(b_0 b_1)),
\]
for all $a_0,a_1\in A,\;b_0,b_1\in B$. The CA $\mathcal{F}$ acts on the $A$-track as $F_{2n}$ when the $B$-track is not going to become $q$, and as $\id_A$ when the $B$-track is going to become $q$, i.e., 
\[
\mathcal{F}((a_0,b_0)(a_1,b_1))=
\begin{cases}
(F_{2n}(a_0a_1),H(b_0b_1)),&\text{if } H(b_0b_1)\neq q\\
(a_0,H(b_0b_1)),&\text{if } H(b_0b_1)=q,
\end{cases}
\]
for all $a_0,a_1\in A,\;b_0,b_1\in B$.

\textbf{(i)} Suppose that $H$ is not nilpotent. The entropy of $\mathcal{G}$ is
\[
h\big(\left(A\times B\right)^\mathbb{N},\mathcal{G}\big)=h\left(A^\mathbb{N},\id_A\right)+h\left(B^\mathbb{N},H\right)=h\left(B^\mathbb{N},H\right),
\]
since $\mathcal{G}=\id_A\times H$. On the other hand, by Proposition \ref{spacetime-diagram-without-spreading-state}, there exists a configuration $e\in B^\mathbb{Z}$ such that for all $i,j\in\mathbb{N}$ we have that $H^i(c)_j\neq q$. But then we have that
\[
h\big(\left(A\times B\right)^\mathbb{N},\mathcal{F}\big)\geq h\left(A^\mathbb{N},F_{2k}\right)>\log_2( \lvert B\rvert)\geq h\big(B^\mathbb{N},H\big),
\]
according to Example \ref{021example} and how we chose $k$. Overall we have that
\[
h\big(\left(A\times B\right)^\mathbb{N},\mathcal{F}\big)>h\big(\left(A\times B\right)^\mathbb{N},\mathcal{G}\big),
\]
as was claimed.

\textbf{(ii)} Suppose that $H$ is nilpotent. Let us first explain informally why we now have that $\mathcal{F}\scong \mathcal{G}$. Both $\mathcal{F}$ and $\mathcal{G}$ behave identically on the $B$-track, so the conjugacy will map this layer simply by identity. Nilpotency of $H$ guarantees that for all configurations the $B$-track will be ${^\omega q}^\omega$ after some constant time $n$. By the definition of $\mathcal{F}$ this means that after $n$ steps $\mathcal{F}$ does nothing on the $A$-track. Since $\mathcal{G}$ never does anything on the $A$-track, we can use this fact to define the conjugacy on the $A$-track simply with $\mathcal{F}^n$. That this is in fact a conjugacy follows since $\mathcal{F}$ is, informally, reversible on the $A$-layer for a fixed $B$-layer.

Let us be exact. First we will define a continuous map $\phi:(A\times B)^\mathbb{N}\rightarrow (A\times B)^\mathbb{N}$ such that $\phi \mathcal{F} =\mathcal{G} \phi$. This $\phi$ will be a CA. Then we show that $\phi$ is injective, which implies reversibility (see, e.g., \cite{Kari05}), and so $\mathcal{F}\scong \mathcal{G}$.

Let $\pi_A:A^\mathbb{N}\times B^\mathbb{N}\rightarrow A^\mathbb{N}$ be the projection $\pi_A(c,e)=c$ for all $c\in A^\mathbb{N}$ and $e\in B^\mathbb{N}$. Define $\pi_B:A^\mathbb{N}\times B^\mathbb{N}\rightarrow B^\mathbb{N}$ similarly.

Let $n\in\mathbb{N}$ be a number such that for all $c\in B^\mathbb{N}$ we have $H^n(c)={^\omega q}^\omega$. Such $n$ exists according to Proposition \ref{uniformlyNilpotent}, since $H$ is nilpotent. Because $\mathcal{F}$ and $\mathcal{G}$ act identically on the $B$-track, $\phi$ will map this layer simply by identity, i.e.,
\[
\pi_B \phi(c,e)=e,
\]
for all $c\in A^\mathbb{N},e\in B^\mathbb{N}$. On the $A$-layer $\phi$ is defined using the fact that after $n$ steps $\mathcal{F}$ does nothing on the $A$-track, i.e., acts the same way $\mathcal{G}$ does. Due to this we define
\[
\pi_A\phi = \pi_A {\mathcal{F}}^n.
\]
Now $\phi$ is a CA, since it is continuous and shift-commuting. Let us show that $\phi$ is a homomorphism. Of course we have that
\[
\phi \mathcal{F} = \mathcal{G} \phi \iff \big(\pi_A\phi \mathcal{F} = \pi_A \mathcal{G}\phi\text{ and } \pi_B\phi \mathcal{F} = \pi_B \mathcal{G}\phi\big).
\]
It is immediate from the definitions that $\pi_B\phi \mathcal{F} = \pi_B \mathcal{G}\phi$. For the equality on the $A$-layer notice first that $\pi_A\mathcal{G}=\pi_A$, and then compute:
\begin{align*}
\pi_A\phi \mathcal{F} \;&\stackrel{\mathmakebox[\widthof{=}]{\text{def.}}}{=}\; (\pi_A {\mathcal{F}}^n)\mathcal{F}\\
&=\;\pi_A \mathcal{F}{\mathcal{F}}^n & &\mid\mid\text{ after }n\text{ steps } \mathcal{F}\\
&=\;\pi_A \mathcal{G} {\mathcal{F}}^n & &\quad\;\;\text{behaves as }\mathcal{G}\\
&=\;\pi_A {\mathcal{F}}^n\\
&\stackrel{\mathmakebox[\widthof{=}]{\text{def.}}}{=}\;\pi_A\phi\\
&=\;\pi_A \mathcal{G}\phi.
\end{align*}
So we have that $\phi \mathcal{F} = \mathcal{G} \phi$.

To prove that $\phi$ is a strong \emph{conjugacy} it is enough to show that $\phi$ is an injection. As the $B$-layer is mapped by identity, we only need to show that for a fixed $e\in B^\mathbb{N}$ we have that for all $c\in A^\mathbb{N}$ there exists a unique $c'\in A^\mathbb{N}$ such that $\phi(c',e)=(c,e)$. By the definition of $\phi$ it is clear that this will hold if
\begin{align*}
\pi_A {\mathcal{F}}^n(\_,e):A^\mathbb{N}&\longrightarrow A^\mathbb{N}\\
c\;\;&\longmapsto \pi_A {\mathcal{F}}^n(c,e)
\end{align*}
is a bijection for every $e\in B^\mathbb{N}$. We can consider this step by step. We claim that $(c,e)=(c_0c_1c_2\dots,e_0e_1e_2\dots)\in(A\times B)^\mathbb{N}$ uniquely defines the $A$-track of the elements in the set $\mathcal{F}^{-1}(c,e)$. Let $(c',e')=(c_0'c_1'c_2'\dots,e_0'e_1'e_2'\dots)\in\mathcal{F}^{-1}(c,e)$. It is enough to show that $c_0'$ is defined uniquely by $(c,e)$. Suppose first that $e_0=q$. Then according to the definition $\mathcal{F}$ acted as identity, so we have that $c_0'=c_0$. Suppose next that $e_0\neq q$. We have two cases, either $e_1=q$ or not. Suppose first that $e_1=q$. Then as before we have that $c_1'=c_1$. And so $c_0'=\rho_{c_1'}^{-1}(c_0)=\rho_{c_1}^{-1}(c_0)$. And lastly suppose that $e_1\neq q$. Then we have that $F_{2n}(c_0'c_1'c_2'\dots)=(c_0c_1\dots)$ according to the definition of $\mathcal{F}$. But now $c_0'$ is uniquely determined since $F_{2n}$ is reversible and the inverse also has radius $1$, namely we have that $c_0'=F_{2n}^{-1}(c_0c_1)$.

To complete the proof we observe that
\[
h(\mathcal{F})=h(\mathcal{G})=h(\id_A)+h(H)=0,
\]
since $\mathcal{F}\scong \mathcal{G}=\id_A\times H$, and $H$ is nilpotent.
\end{proof}

Since the two-sided variant of Theorem \ref{inseparablePairs} can be reduced to the one-sided case, also the two-sided variant is undecidable. We also get the following corollary.

\begin{corollary}
\label{undecidableProperties}
Let $\mathbb{M}=\mathbb{N}$ or $\mathbb{M}=\mathbb{Z}$. Let $F,G:A^\mathbb{M}\rightarrow A^\mathbb{M}$ be two cellular automata. Then the following hold:
\begin{enumerate}
\item It is undecidable whether $F$ and $G$ are (strongly) conjugate.
\item It is undecidable whether $F$ is a (strong) factor of $G$.
\item It is undecidable whether $F$ is a (strong) subsystem of $G$.
\end{enumerate}
\end{corollary}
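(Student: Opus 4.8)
The plan is to derive all three undecidability results directly from Theorem \ref{inseparablePairs} by showing that, for each decision problem, the set of ``yes'' instances contains every pair in class (ii) and is disjoint from every pair in class (i); a decision procedure would then furnish a recursive set separating the two classes, contradicting their recursive inseparability. The engine for disjointness is Proposition \ref{entropy-of-subsystem-and-factor}, which bounds entropy across factor maps, embeddings and conjugacies and, crucially, makes no reference to the shift, so that it applies to the weak variants as well.

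First, for conjugacy, I would observe that the pairs in class (ii) satisfy $\mathcal{F}\scong\mathcal{G}$, so they are strongly conjugate and a fortiori (weakly) conjugate; hence every class-(ii) pair is a ``yes'' instance of both the strong and the weak conjugacy problems. Conversely, a class-(i) pair has $h(\mathcal{F})>h(\mathcal{G})$, so by Proposition \ref{entropy-of-subsystem-and-factor} the two systems cannot be conjugate in either sense, and thus no class-(i) pair is a ``yes'' instance. If either conjugacy problem were decidable, its ``yes''-set would be a recursive set containing class (ii) and disjoint from class (i), contradicting Theorem \ref{inseparablePairs}.

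Next, for factor and subsystem, the same template applies, with the roles of $F$ and $G$ read off the statement. Because a conjugacy is in particular a (bijective) factor map and an embedding, every class-(ii) pair makes $\mathcal{F}$ a strong, hence also a weak, factor of $\mathcal{G}$, and likewise a strong, hence also a weak, subsystem; these pairs are therefore ``yes'' instances of all four remaining problems. For disjointness I would again invoke Proposition \ref{entropy-of-subsystem-and-factor}: if $\mathcal{F}$ were a factor of $\mathcal{G}$, or a subsystem of $\mathcal{G}$, then $h(\mathcal{F})\le h(\mathcal{G})$, which is impossible for a class-(i) pair. Each of the remaining four ``yes''-sets thus separates the two classes, and decidability of any of them would again contradict Theorem \ref{inseparablePairs}. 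The two-sided statements follow verbatim from the two-sided form of Theorem \ref{inseparablePairs} noted above.

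The argument is short, so the only point demanding care, and the one I would flag as the main obstacle, is keeping the strong and weak variants honest simultaneously. Membership of class (ii) in the ``yes''-sets is the easy direction, since strong conjugacy specializes to each weaker relation; the delicate direction is disjointness from class (i), and it goes through only because Proposition \ref{entropy-of-subsystem-and-factor} constrains entropy for arbitrary topological homomorphisms, not merely shift-commuting ones. I would make this explicit, so that dropping the shift-commutation requirement genuinely does not enlarge any ``yes''-set enough to meet class (i).
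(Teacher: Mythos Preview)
Your proposal is correct and follows essentially the same approach as the paper: derive each undecidability claim from Theorem~\ref{inseparablePairs} by noting that class~(ii) pairs are always ``yes'' instances (via the strong conjugacy) while class~(i) pairs are always ``no'' instances (via the entropy inequality of Proposition~\ref{entropy-of-subsystem-and-factor}), so a decision procedure would recursively separate the two classes. Your explicit remark that Proposition~\ref{entropy-of-subsystem-and-factor} applies to arbitrary topological homomorphisms, and hence covers the weak variants, is a useful clarification that the paper leaves implicit.
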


\begin{proof}
{\it 1.} The pairs in the set {\it (i)} of Theorem \ref{inseparablePairs} can not be (strongly) conjugate, and the pairs in {\it(ii)} have to be. Thus deciding (strong) conjugacy would separate these sets.

{\it 2.} One of the CA's in the pair from the set {\it (i)} has strictly higher entropy than the other, so it can not be a (strong) factor of the other. On the other hand CA's of pairs from the set {\it (ii)} are (strong) factors of each other. So checking whether both CA's of a pair is a (strong) factor of the other would separate the sets of Theorem \ref{inseparablePairs}.

{\it 3.} In a similar way, since a subsystem can not have higher entropy.
\end{proof}

\section{Other results}

\subsection{Decidable cases}

Now that we know conjugacy to be undecidable for one-dimensional cellular automata, we can consider what happens if we restrict to some natural subclass. Recently it was proved that

\begin{theorem}{(\cite[Corollary 5.17.]{Epperlein2017})}
Conjugacy of periodic cellular automata on one- or two-sided subshifts of finite type is decidable.
\end{theorem}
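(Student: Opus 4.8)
The plan is to treat a periodic cellular automaton $(X,F)$, where $X\subseteq A^\mathbb{M}$ is an SFT and $F^p=\id$, as a continuous action of the finite cyclic group $\mathbb{Z}/p$ on the compact, zero--dimensional metric space $X$, and to decide \emph{topological} conjugacy of the system $(X,F)$ (I would work with weak conjugacy, ignoring $\sigma$; requiring the conjugacy to commute with the shift as well is a separate matter, since it also entails a conjugacy of the underlying SFTs). First I would note that the minimal period $p$ is computable: for each candidate $p=1,2,\dots$ the identity $F^p=\id$ is decidable, because $F^p$ has an effectively computable local rule and two CA agree iff their local rules agree on a window of size their radius, and the search halts precisely because $F$ is periodic. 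Then, for every divisor $d\mid p$, I would form the fixed set $X_d=\{x\in X : F^d(x)=x\}$. Since $F$ has finite radius, the condition $F^d(x)_i=x_i$ depends only on a bounded window of $x$ around $i$, so $X_d$ is obtained from $X$ by forbidding finitely many additional words; hence each $X_d$ is again an SFT and is effectively computable. The chain $(X_d)_{d\mid p}$, ordered by divisibility, records the orbit--type stratification of the action.

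Next I would argue that this stratification is a complete conjugacy invariant. A topological conjugacy $\phi\colon X\to Y$ must send points of $F$--period $d$ to points of $G$--period $d$, so it restricts to homeomorphisms $X_d\to Y_d$ compatibly with the filtration. The structural heart of the proof is the converse: using that $X$ is zero--dimensional, one assembles a matching system of such homeomorphisms into a single equivariant homeomorphism. On the locus $E_d\subseteq X_d$ of points of \emph{exact} period $d$ the group acts freely through $\mathbb{Z}/d$, and free actions of finite groups on zero--dimensional spaces admit clopen transversals, so the action on $E_d$ is an equivariantly trivial bundle over its orbit space and is determined up to conjugacy by the homeomorphism type of that orbit space. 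Organising these data through an equivariant Cantor--Bendixson analysis of $X_1\subseteq\cdots\subseteq X_p=X$ should yield: $(X,F)$ and $(Y,G)$ are conjugate iff, for each $d$, the corresponding strata agree as topological spaces in a manner compatible with how lower strata appear as limits of higher ones.

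It then remains to decide the required topological comparisons, i.e.\ homeomorphism of the compact zero--dimensional spaces built from the SFTs $X_d$. This step is effective: from an SFT presentation one can decide whether the space is countable (scattered) or contains a Cantor set, and in the countable case compute its Cantor--Bendixson rank together with the finite number of points of maximal rank, which by the Mazurkiewicz--Sierpi\'nski classification is a complete homeomorphism invariant; in the uncountable case the perfect kernel is a Cantor set and only the scattered decoration above it must be compared. All of these quantities are computable from the finite data defining the $X_d$, so comparing invariants decides conjugacy. The same reasoning applies verbatim for $\mathbb{M}=\mathbb{N}$ and $\mathbb{M}=\mathbb{Z}$, since it uses only compactness, zero--dimensionality, and the finite radius of $F$.

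I expect the main obstacle to be the completeness claim in the second step, namely proving that stratum--wise homeomorphisms of the fixed sets genuinely glue into one global equivariant homeomorphism. The difficulty is topological rather than computational: one must track how the possibly uncountable lower strata sit inside the closures of the higher ones and match these simultaneously across all orbit types, which demands an equivariant refinement of the classification of zero--dimensional compact metric spaces and careful bookkeeping with the clopen transversals of each free stratum, including the case where a Cantor set is present. Once this structural classification is established, the reduction to the computable Cantor--Bendixson data, and hence decidability, should be routine.
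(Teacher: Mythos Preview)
The paper does not supply a proof of this theorem; it is quoted as \cite[Corollary~5.17]{Epperlein2017} and used only as background, with no argument reproduced. Consequently there is nothing in this paper to compare your proposal against.

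As a self-contained argument your outline carries exactly the gap you already flag: the assertion that the fixed-set filtration $(X_d)_{d\mid p}$, together with the homeomorphism types of the strata and their mutual incidences, is a \emph{complete} invariant for $\mathbb{Z}/p$-actions on compact zero-dimensional metric spaces is the entire substance of such a result, and it is not established here. Your decidability step is also not fully justified in the uncountable case: once the perfect kernel is a Cantor set, the homeomorphism type of the whole space depends on how the scattered points accumulate on that kernel, and the phrase ``only the scattered decoration above it must be compared'' does not by itself reduce the question to finitely many computable invariants. Both issues may well be tractable for spaces arising from SFTs, but the proposal does not explain how.
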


Periodic cellular automata are the least sensitive to changes in the initial configuration. Next we consider the most sensitive cellular automata, i.e., positively expansive ones. A dynamical system $(X,f)$ is called \emph{positively expansive} if
\[
\exists\varepsilon>0:\forall x,y\in X:\exists n\in\mathbb{N}: x\neq y\implies d(f^n(x),f^n(y))>\varepsilon.
\]
Positively expansive CA's are quite extensively studied which allows us to deduce the following result.

\begin{proposition}
Conjugacy of positively expansive cellular automata on one- or two-sided full shifts is decidable.
\end{proposition}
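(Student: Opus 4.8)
The plan is to reduce topological conjugacy of positively expansive cellular automata to the conjugacy problem for one-sided subshifts of finite type, which is decidable. The bridge is the structure theorem for positively expansive CA: every positively expansive CA on a full shift, one- or two-sided, is topologically conjugate to a one-sided SFT carrying its shift. Concretely, if $F:A^\mathbb{M}\to A^\mathbb{M}$ is positively expansive with expansivity constant $2^{-m}$, I would use the trace map $\theta$ sending $c$ to the sequence $\big(F^i(c)_W\big)_{i\in\mathbb{N}}$, where $W$ is a central window wide enough to witness the constant ($W=[0,m]$ when $\mathbb{M}=\mathbb{N}$ and $W=[-m,m]$ when $\mathbb{M}=\mathbb{Z}$). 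Positive expansivity is exactly what forces $\theta$ to be injective, while continuity and the identity $\theta F=\sigma\theta$ are automatic; for $m$ large enough the image is a one-sided SFT, so $\theta$ is a conjugacy from $(A^\mathbb{M},F)$ onto this SFT. This is the known structure theory of positively expansive CA (see, e.g., \cite{Kurka03}).

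Granting this, two positively expansive CA $F$ and $G$ are topologically conjugate if and only if their associated one-sided SFTs are conjugate, by transitivity. So the algorithm I would give is: first, from the local rule of $F$ compute a window $m$ and a finite presentation (a transition graph, or a finite list of forbidden words) of the one-sided SFT attached to $F$, and likewise for $G$; second, decide whether these two one-sided SFTs are conjugate. The second step is where decidability is gained: in contrast to the still-open two-sided case, conjugacy of one-sided SFTs is classically decidable, because each one-sided SFT has a canonical minimal presentation obtained by iterated state amalgamation, merging states with identical follower sets, and two of them are conjugate exactly when their minimal presentations are isomorphic labelled graphs --- a computable comparison.

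I expect the main obstacle to be the effectiveness of the first step, namely turning the CA into an explicit SFT presentation. The delicate point is certifying an expansivity window: some $m$ works because $F$ is positively expansive, but I must be able to verify that a candidate $m$ makes $\theta$ injective with SFT image. I would run a terminating search over $m$, testing for each candidate whether two distinct configurations can agree on the central window for all time --- a condition checkable from the local rule --- where positive expansivity guarantees the search halts; once a valid window is fixed, reading off the finite-type constraints and the transition graph is routine.

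Finally, I would flag the distinction between weak and strong conjugacy. The reduction above settles topological (weak) conjugacy, since $\theta$ discards the spatial shift of $A^\mathbb{M}$. For strong conjugacy one must also record that $\sigma$ becomes a distinguished shift-commuting automorphism of the one-sided SFT, so that strong conjugacy of $F$ and $G$ corresponds to an isomorphism of the SFTs that intertwines these induced automorphisms; this stays within the amalgamation framework but requires the extra bookkeeping of comparing the automorphisms, and it is the part I expect to demand the most care.
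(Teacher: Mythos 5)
Your overall architecture is exactly the paper's: use positive expansivity to conjugate each CA onto its trace (column) subshift, invoke the fact that these trace subshifts are conjugate to one-sided SFTs (Boyle--Kitchens in the one-sided case, Nasu in the two-sided case), and then decide conjugacy of the resulting one-sided SFTs by Williams' theorem. The paper dispatches the one step you rightly identify as delicate --- effectively passing from the local rule to an SFT presentation --- by citing a theorem of di Lena, whereas you try to argue it directly, and this is where your sketch has a genuine gap. The per-candidate test you propose, ``can two distinct configurations agree on the central window for all time,'' quantifies over all configurations and over infinitely many time steps, so it is not, as stated, ``a condition checkable from the local rule'': no finite simulation inspects all time, and you give no argument reducing it to one. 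What saves the argument is a compactness reformulation: if $F$ is positively expansive with witness window $W$, then there exists a finite $N$ such that agreement of the $W$-columns during the first $N$ steps already forces agreement on one more cell adjacent to $W$ (on each side, in the two-sided case). That implication involves only finitely many cells and time steps, hence is a finite check from the local rule, and conversely its truth for a pair $(m,N)$ certifies, by induction on shifted windows, that $m$ is a valid expansivity window. So the correct procedure is a dovetailed search over pairs $(m,N)$ for this finite certificate, which halts precisely because $F$ is promised to be positively expansive. The same constant $N$ is also what bounds the order of the trace SFT; your claim that ``reading off the finite-type constraints and the transition graph is routine'' once $m$ is fixed is not right either, because the window alone does not tell you how long the forbidden words must be. Both repairs together are essentially the content of the effectiveness theorem the paper cites.

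A further remark: your closing paragraph on strong conjugacy is sensible but optimistic. The paper explicitly leaves decidability of strong conjugacy for positively expansive CA as an open question; the trace conjugacy discards the spatial shift, and deciding whether an isomorphism of one-sided SFTs can be chosen to intertwine the induced shift-automorphisms is not known to be within reach of the amalgamation machinery. Since the proposition only asserts decidability of plain topological conjugacy this does not affect your proof, but it is more than ``bookkeeping.''
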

\begin{proof}
Let $F:A^\mathbb{M}\rightarrow A^\mathbb{M}$ and $G:B^\mathbb{M}\rightarrow B^\mathbb{M}$ be two positively expansive cellular automata. Due to the positive expansivity, $F$ and $G$ are conjugate to $\tau_k(F)$ and $\tau_k(G)$ (resp.) for large enough $k$. These subshifts are conjugate to subshifts of finite type (\cite{BoyleKitchens99} for one-sided case, \cite{Nasu95} for two-sided case). According to \cite[Theorem 36]{diLena2007} we can effectively compute these subshifts. The claim follows, as the conjugacy of one-sided subshifts of finite type is decidable (\cite{Williams73}).
\end{proof}

Naturally these results raise the question whether strong conjugacy is decidable when restricted to periodic or positively expansive cellular automata. Also the questions whether (strong) conjugacy is decidable for eventually periodic, i.e., equicontinuous, cellular automata (\cite[Question 8.1.]{Epperlein2017}), or for expansive cellular automata remain unanswered. It is conjectured that expansive cellular automata are conjugate to two-sided subshifts of finite type (this is known for expansive two-sided cellular automata with one-sided neighborhoods), however the previous proof still wouldn't work, as it is not known whether conjugacy of two-sided subshifts of finite type is decidable.

\subsection{Conjugacy of subshifts}

Questions about conjugacy provide perhaps the most well-known open problems in symbolic dynamics. For example it is unknown whether conjugacy of two-sided subshifts of finite type is decidable. It is also unknown whether conjugacy of one- or two-sided sofic shifts is decidable. On the other hand conjugacy of one-sided subshifts of finite type is known to be decidable; we used this fact to show that conjugacy of positively expansive cellular automata is decidable. We can ask if we could work to the opposite direction, i.e., if the classical problems for subshifts could be answered using cellular automata. For example, undecidability of conjugacy for one-sided expansive cellular automata would imply undecidability of conjugacy of two-sided subshifts of finite type, although it seems more likely that conjugacy for one-sided expansive cellular automata is decidable. A more plausible result would be that conjugacy is undecidable for expansive two-sided cellular automata, which together with the conjecture that every expansive cellular automaton is conjugated to a two-sided SFT (\cite[Conjecture 30.]{Kurka09}), would imply undecidability of conjugacy of two-sided SFT's.

All of the above relied on the connection between cellular automaton and its trace subshift. The problem with this approach is that only expansive cellular automata are conjugate to subshifts. However there could be some more inventive ways to link subshifts and cellular automata to obtain decidability results. We provide the following, somewhat artificial, result.

\begin{proposition}
Let $X,Y\subseteq(A\times A)^\mathbb{M}$ be two subshifts of finite type. It is undecidable whether $X$ and $Y$ are conjugate via a conjugacy of the form $\phi\times\phi$.
\end{proposition}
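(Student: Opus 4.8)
The plan is to reduce the strong conjugacy problem for cellular automata, which is undecidable by Corollary \ref{undecidableProperties}, to this problem. Given two cellular automata $F,G\colon C^\mathbb{M}\to C^\mathbb{M}$ over a common alphabet $C$ (for instance the pair $\mathcal{F},\mathcal{G}$ produced in the proof of Theorem \ref{inseparablePairs}, which live over $A\times B$), I would associate to each its \emph{graph subshift}
\[
X_F=\{(c,F(c))\mid c\in C^\mathbb{M}\}\subseteq (C\times C)^\mathbb{M},
\]
and likewise $X_G$, thereby instantiating the proposition with alphabet $C$. Since the condition $d_i=F(c)_i$ depends only on a bounded window of $c$ around position $i$, the set $X_F$ is cut out by finitely many forbidden patterns, hence is an SFT computable effectively from the local rule of $F$. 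Because $F$ commutes with $\sigma$, the diagonal shift $\sigma\times\sigma$ preserves $X_F$, and the map $\iota_F\colon c\mapsto (c,F(c))$ is a shift-commuting homeomorphism of $C^\mathbb{M}$ onto $X_F$ whose inverse is the first-track projection $\pi_1$.

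The heart of the argument is the equivalence
\[
X_F\cong X_G\text{ via a conjugacy of the form }\phi\times\phi \iff F\scong G .
\]
For the easy direction, if $\psi$ is a strong conjugacy with $\psi F=G\psi$, then $\psi\times\psi$ is a homeomorphism commuting with $\sigma\times\sigma$, and $(\psi\times\psi)(c,F(c))=(\psi c,\,G\psi c)\in X_G$, so $\psi\times\psi$ restricts to a conjugacy $X_F\to X_G$ of the required form. For the converse I would argue that a conjugacy $\Phi=\phi\times\phi\colon X_F\to X_G$ forces $\phi$ itself to be a reversible cellular automaton intertwining $F$ and $G$: applying $\Phi$ to $(c,F(c))$ yields $(\phi(c),\phi(F(c)))\in X_G$, whence $\phi F=G\phi$; and writing $\phi=\pi_1\,\Phi\,\iota_F$ exhibits $\phi$ as a composite of shift-commuting homeomorphisms, so $\phi$ is a shift-commuting self-homeomorphism of $C^\mathbb{M}$, i.e.\ a reversible CA, realizing $F\scong G$.

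The main thing to get right is this converse, namely that the product form of the conjugacy lets one recover a single shift-commuting homeomorphism $\phi$ of $C^\mathbb{M}$ from the conjugacy of the two-track systems. The homeomorphisms $\iota_F$, $\iota_G$ and the fact that the first track of $X_F$ ranges over all of $C^\mathbb{M}$ are exactly what make this extraction work. I would also stress why the restriction to product conjugacies is essential, which is what makes the statement, as promised, somewhat artificial: every $X_F$ is conjugate to the full shift $(C^\mathbb{M},\sigma)$ via $\iota_F$, so without the form constraint all these SFTs are trivially conjugate and no information survives. Granting the equivalence, undecidability is immediate, since an algorithm deciding whether $X_F$ and $X_G$ are $\phi\times\phi$-conjugate would decide strong conjugacy of $F$ and $G$, contradicting Corollary \ref{undecidableProperties}.
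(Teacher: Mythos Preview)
Your proposal is correct and follows essentially the same approach as the paper: both reduce from strong conjugacy of cellular automata by encoding a CA $F$ as its graph subshift $X_F=\{(c,F(c))\}$ and showing that a $\phi\times\phi$ conjugacy between $X_F$ and $X_G$ is equivalent to $\phi$ being a strong conjugacy between $F$ and $G$. You supply somewhat more detail than the paper (the explicit decomposition $\phi=\pi_1\,\Phi\,\iota_F$ to justify that $\phi$ is a shift-commuting homeomorphism, and the remark on why the product-form restriction is essential), but the argument is the same.
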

\begin{proof}
The proof is a direct reduction from strong conjugacy of cellular automata. Let $F,G:A^\mathbb{M}\rightarrow A^\mathbb{M}$ be two CA's. Let $X=\{(c,F(c))\mid c\in A^\mathbb{M}\}$ and $Y=\{(c,G(c))\mid c\in A^\mathbb{M}\}$. These subshifts are naturally conjugate to $A^\mathbb{M}$. Suppose there exists a conjugacy $\phi\times\phi:X\rightarrow Y$. Then $\phi$ commutes with the shift and for every $c\in A^\mathbb{M}$ we have that $(\phi(c),\phi F(c))=(e,G(e))$, where  $e$ has to be $\phi(c)$, and so $\phi F(c)=G\phi(c)$ for all $c\in A^\mathbb{M}$. In other words $\phi$ is a strong conjugacy of $(A^\mathbb{M},F)$ and $(A^\mathbb{M},G)$.

On the other hand, any strong conjugacy $\phi$ from $(A^\mathbb{M},F)$ to $(A^\mathbb{M},G)$ immediately gives a conjugacy $\phi\times\phi$ between $X$ and $Y$.
\end{proof}

\section{Conclusion}

We have proved that the decision problems "are (strongly) conjugate", "is a (strong) subsystem of" and "is a (strong) factor of" are undecidable for one-dimensional one- and two-sided cellular automata. We note that these results provide an example that contradicts the rule of thumb that one time step properties of one-dimensional cellular automata are decidable.

A natural question to ask is whether conjugacy remains undecidable even for reversible cellular automata. Since our proof is based on the undecidability of nilpotency, it is clear that a different approach is needed. We note that though for non-reversible cellular automata  one- and two-sided cases differ only little, for reversible cellular automata the one-sided case seems far more distant as there are no known undecidability results for one-sided cellular automata that could be used for the reduction. For two-sided cellular automata periodicity and mortality problems (\cite{KariOllinger2008},\cite{KariLukkarila2009}) are known to be undecidable, and provide a possible replacement for the nilpotency problem in the reversible case. This is of course implicitly assuming that one is expecting the problem to remain undecidable.

Lastly it is interesting to consider whether there is way to solve or at least shed new light on the long-standing open problems of symbolic dynamics, namely conjugacy problems of subshifts.

\end{document}